\documentclass[11pt]{article}

\usepackage[a4paper,margin=2.5cm]{geometry}
\usepackage{amsmath,amssymb,amsthm}
\usepackage{booktabs,array,tabularx}
\usepackage{graphicx}
\graphicspath{{figures/}}
\usepackage{algorithm,algpseudocode}
\usepackage{float}
\usepackage[hidelinks]{hyperref}
\usepackage{url}
\hypersetup{
  unicode=true,
  pdftitle={NativeScope: Relation-Localized Retrieval over Native Topology with a Correct Anchor},
  pdfauthor={Long Wang}
}

\newtheorem{proposition}{Proposition}

\newcommand{\FQ}{NS-FullQ}
\newcommand{\metric}{\mathrm{Recall}@1024}
\title{NativeScope: Relation-Localized Retrieval over Native Topology with a Correct Anchor}
\author{Long Wang\\
  {\small\href{mailto:alueas@126.com}{\texttt{alueas@126.com}}}\\
  {\small\href{https://orcid.org/0009-0006-8371-1562}{ORCID: 0009-0006-8371-1562}}}
\date{}

\begin{document}
\pagestyle{plain}
\maketitle

\begin{abstract}
Dense retrieval usually ranks text chunks by their semantic similarity to a question. This ignores structure that many data systems already store, including section membership, session boundaries, and native order. We propose NativeScope, a scope-then-rank method for queries with a known anchor and relation. It represents a query as $q\rightarrow(A,r,B)$: the anchor $A$ and relation $r$ select native units through belonging, before, or after operators, and the target term $B$ ranks only chunks that overlap the selected scope. An internal variant, NS-FullQ, ranks the same candidates with the full question. We evaluate both methods on 200 controlled document and memory records derived from QASPER and LongMemEval under a 1,024-token budget. NativeScope attains native-unit recall of 89.28\% for documents and 72.50\% for memories, improving over instance-wide Dense RAG by 42.75 and 22.00 percentage points. NS-FullQ reaches 87.78\% and 68.50\%; its differences from NativeScope are inconclusive, locating the primary gain in relational scoping rather than the shorter ranking query. With automatic Top-1 anchors, memory recall falls to 35.50\%. NativeScope is therefore effective when anchor coordinates and native relations are reliable, but hard scoping inherits errors from the localization interface.

\medskip
\noindent\textbf{Keywords:} relational retrieval; native topology; retrieval-augmented generation; long-term memory; conditional evaluation
\end{abstract}

\section{Introduction}

Retrieval-augmented generation typically finds evidence by comparing a question with text chunks. Some questions, however, constrain location as well as content: the answer may lie in the section containing a paragraph or before a particular dialogue turn. Documents and memory systems often preserve these relations as section identifiers, session boundaries, and native order. Treating the collection as flat text discards information that can directly restrict the search space.

We ask a conditional question: \emph{how much does relational scoping help once the query anchor is known and the stored topology is trustworthy?} The anchor may come from a user selection, a stable record identifier, or a validated upstream locator. Our main evaluation supplies the correct coordinate to isolate retrieval; a separate automatic-localization experiment measures how the method degrades when that condition fails.

NativeScope writes a relational query as $q\rightarrow(A,r,B)$. $A$ identifies a location, $r$ specifies a scope operation over native structure, and $B$ identifies what should be retrieved within that scope. The key step is to use the relation as a candidate-domain constraint before semantic ranking. The full question can also rank candidates within the same scope; we call this internal variant \FQ{}. Relational scoping and the choice between $q$ and $B$ as a ranking input are therefore evaluated separately.

NativeScope connects native coordinates, relation operators, and semantic ranking in a single retrieval interface. Unlike systems that infer a new hierarchy or graph, it executes relations already present in the source data. We implement the same pipeline over document paragraphs and dialogue turns, with shared definitions of coordinates, token budget, and evidence coverage. The evaluation separates the value of relational scoping from the choice of ranking query and from the reliability of anchor localization.

\section{Related Work}

\subsection{Dense retrieval and retrieval-augmented generation}

Dense retrieval and RAG rank semantically related passages and then expose them to a generator \cite{karpukhin2020dpr,lewis2020rag,izacard2021fid}. Iterative systems additionally decide when to retrieve again or how retrieval should interact with reasoning and generation \cite{trivedi2023ircot,jiang2023flare,asai2024selfrag}. NativeScope addresses a different decision inside each retrieval step: which candidates are eligible when the data source already supplies a relation to an anchor.

Long context does not eliminate candidate selection. Models may use relevant information poorly when it appears in the middle of a context \cite{liu2024lost}, and LongRAG organizes long-document retrieval from both global-information and local-fact perspectives \cite{zhao2024longrag}. Our evaluation keeps the encoder and token budget fixed and isolates the effect of candidate scope rather than generation quality.

\subsection{Hierarchy, graph indexes, and retrieval granularity}

Structured retrievers organize candidates through document hierarchies, recursive summaries, entity graphs, or discourse trees \cite{liu2021dhr,sarthi2024raptor,edge2024graphrag,lu2026hichunk,chen2026beyond}. These systems show that flat similarity search can benefit from structured candidate organization, but generally construct or infer an index during preprocessing.

Retrieval quality also depends on the indexed unit and its context: prior work compares document, passage, sentence, and proposition indexes, preserves global context during chunk encoding, or expands hits by semantic continuity \cite{chen2024densex,gunther2024late,langlois2026scar}. NativeScope is orthogonal to these choices. It executes section, session, and order relations already stored by the source, then maps the resulting native units to whatever chunks the index provides.

\subsection{Conversational retrieval, anchors, and long-term memory}

Conversational retrieval must recover an information need from multi-turn context, often through learned conversation representations, query rewriting, or salient entities \cite{mao2024chatretriever,roy2024selfmultirag,shavarani2025entity}. This work suggests a useful separation between locating a relevant context and ranking content within it. NativeScope makes that separation explicit through anchor, relation, and target inputs.

Long-term memory systems add storage policies, hierarchy, temporal organization, and contextual reconstruction to retrieval \cite{packer2023memgpt,zhong2024memorybank,ye2026hmem,tang2026mnemis,wang2026emem}. CueMem is especially close to anchor-centered retrieval because it locates a source turn from a cue and restores related context \cite{wang2026cuemem}. Our experiments hold the decomposition, topology, and encoder fixed to isolate the effect of executing an explicit relation over stored coordinates.

\section{Problem Formulation and Method}
\label{sec:method}

Figure~\ref{fig:pipeline} shows the complete NativeScope pipeline. A relation is executed over native units before eligible units are mapped to text chunks. Semantic similarity ranks only within the resulting relation-localized scope, separating \emph{where to search} from \emph{what to search for}.

\begin{figure}[H]
\centering
\includegraphics[width=\linewidth]{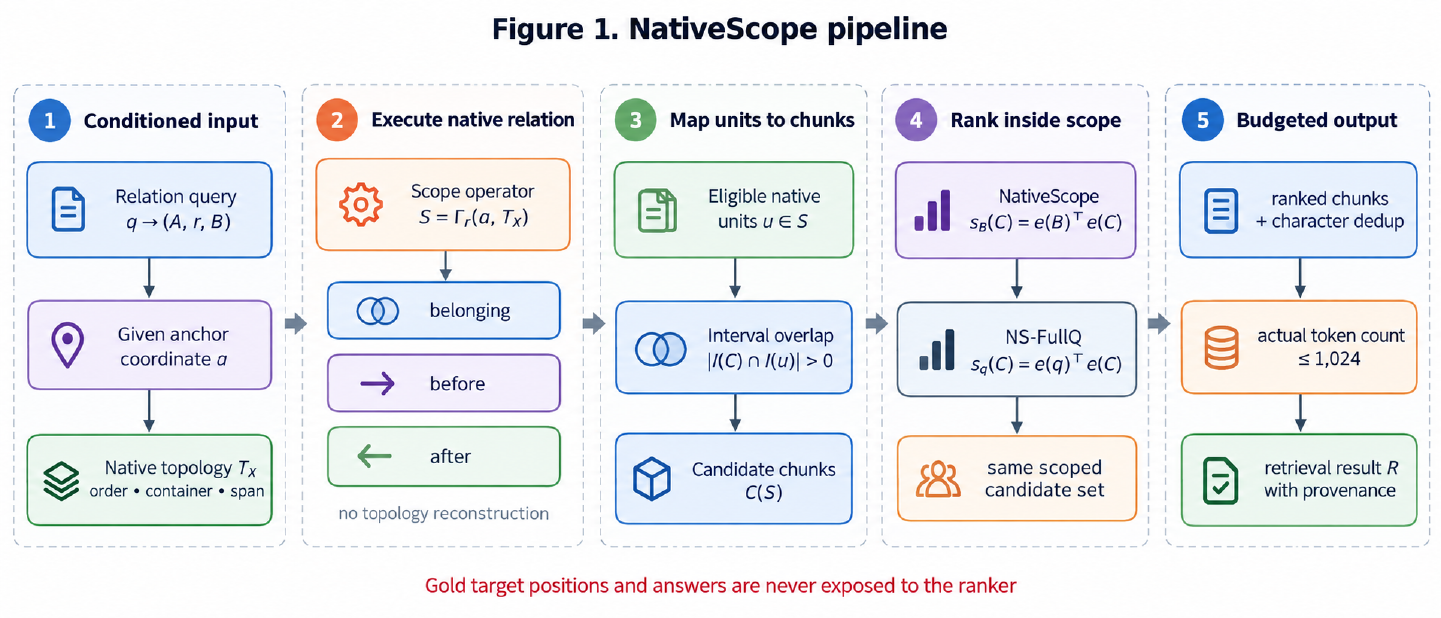}
\caption{NativeScope pipeline. A decomposed query and correct anchor drive a native relation operator. Eligible units are mapped to candidate text chunks by character-interval overlap. NativeScope and NS-FullQ share the same scoping stage and differ only in within-scope ranking input. Gold target locations and answers are not exposed to the ranker.}
\label{fig:pipeline}
\end{figure}

\subsection{Input conditions and native topology}

Let a runtime instance be $X=(U,T_X)$, where $U=\{u_1,\ldots,u_n\}$ contains native units and $T_X$ stores their order, container identifiers, and character intervals. Documents use paragraphs as units and sections as containers; memories use dialogue turns as units and sessions as containers. Let $o(u)$ be unit order, $c(u)$ its container, and $I(u)$ its character interval. The text-chunk index $\mathcal C$ is produced from the original text, and every chunk retains its interval $I(C)$ and a mapping to overlapping native units.

The main task receives query $q$, a frozen decomposition $(A,r,B)$, correct anchor coordinate $a$, and reliable topology $T_X$. Gold target locations and answers are not provided to the ranker. The anchor is additional structural input and is held constant across all anchor-dependent comparisons. Automatic parsing and localization can be connected upstream, but their reliability is evaluated separately from the main retrieval condition.

\subsection{Why observed topology matters}

Pretrained representations may encode common structural semantics, but they can use an instance-specific relation only when the retrieval interface exposes it. Consider two instances with the same observable query and text but different section or session assignments. A retriever that sees only the shared text must return the same candidates for both, even when the relation points to different target units. We refer to this mismatch between represented content and unavailable runtime structure as a local topology-availability gap.

This observation motivates the interface rather than a universal impossibility claim. A sufficiently large budget could return the union of all plausible targets, and some questions remain answerable from content alone. NativeScope instead asks whether explicitly executing an available relation can improve retrieval under a fixed budget.

\subsection{Relation-localized scoping and target ranking}

We implement three operators. The before and after relations use native unit order rather than general temporal reasoning:
\begin{align}
\Gamma_{\mathrm{belonging}}(a,T_X)&=\{u:c(u)=c(a)\},\\
\Gamma_{\mathrm{before}}(a,T_X)&=\{u:1\le o(a)-o(u)\le8\},\\
\Gamma_{\mathrm{after}}(a,T_X)&=\{u:1\le o(u)-o(a)\le8\}.
\end{align}
Directional scopes follow global native order and are not truncated at the current container. Let $S=\Gamma_r(a,T_X)$. Eligible chunks are
\begin{equation}
\mathcal C(S)=\{C\in\mathcal C:\exists u\in S,\ |I(C)\cap I(u)|>0\}.
\end{equation}
This overlap mapping does not trim chunks at a scope boundary, so an eligible boundary chunk may contain a small amount of out-of-scope text. Scoping constrains candidate eligibility rather than enforcing character-level clipping.

With normalized encoder $e$, NativeScope ranks eligible chunks by
\begin{equation}
s_B(C)=e(B)^\top e(C),\qquad C\in\mathcal C(S).
\end{equation}
\FQ{} replaces $B$ with the full question $q$ while keeping the anchor, relation operator, and candidate set identical. The two variants isolate the effect of the ranking input within a common relation-localized framework.

Figure~\ref{fig:operators} illustrates the operators on a common native-unit sequence. Relational selection precedes chunk scoring: only chunks whose character intervals intersect selected units become candidates, regardless of the semantic scores of out-of-scope chunks.

\begin{figure}[H]
\centering
\includegraphics[width=\linewidth]{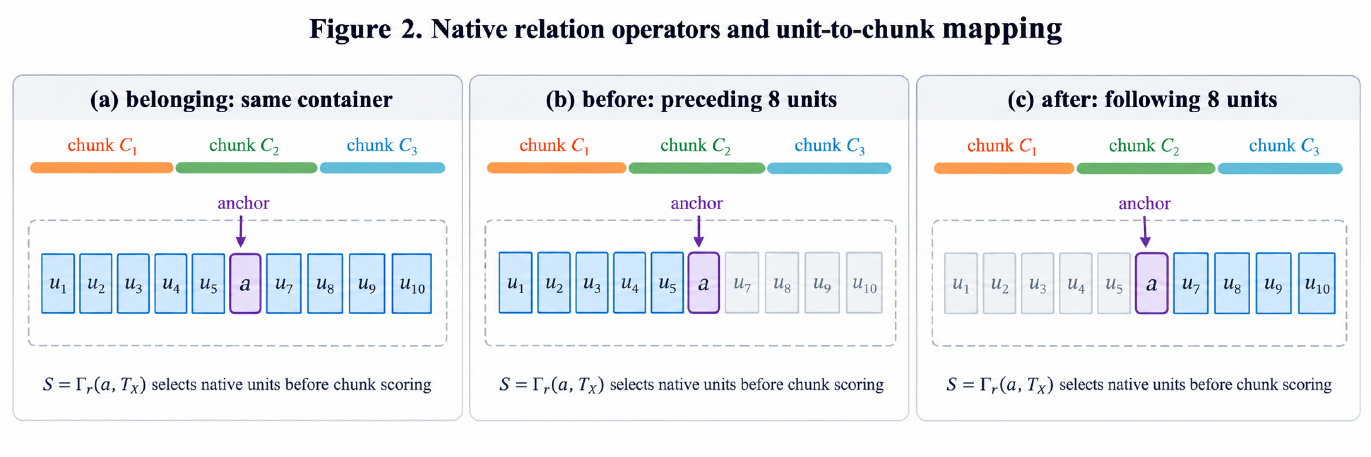}
\caption{Native relation operators and unit-to-chunk mapping. Belonging selects the anchor's container, while before and after select up to eight units in the respective direction under global native order. Colored bars denote chunks that may span several native units.}
\label{fig:operators}
\end{figure}

\subsection{Budget and chunking}

Chunks are ranked by decreasing score, with ties broken by start character, end character, and chunk identifier. Context is assembled in rank order while already returned overlapping characters are deduplicated. The BGE tokenizer counts the actual concatenated text. Retrieval stops when adding the next chunk would exceed 1,024 tokens; it neither skips that chunk to seek a shorter one nor truncates a prefix to fill the remaining budget.

\begin{algorithm}[htbp]
\caption{NativeScope under a correct anchor}
\begin{algorithmic}[1]
\Require Correct anchor $a$, relation $r$, target term $B$, topology $T_X$, chunks $\mathcal C$ with character intervals, budget $L$
\Ensure Ranked returned chunks $R$; budget accounting uses their deduplicated context
\State $S\gets\Gamma_r(a,T_X)$; $R\gets\varnothing$
\State Construct $\mathcal C(S)$ by native character-interval overlap
\State Rank $\mathcal C(S)$ by $s_B$ with deterministic coordinate-based tie breaking
\For{each ranked chunk $C$}
  \State $H\gets\operatorname{DedupeAndJoin}(R\mathbin{+}[C])$
  \If{$\operatorname{TokenCount}(H)>L$}
    \State \textbf{break}
  \EndIf
  \State $R\gets R\mathbin{+}[C]$
\EndFor
\State \Return $R$
\end{algorithmic}
\end{algorithm}

\begin{proposition}[Conditional chunking invariance of the native scope]
If two chunkings leave native units, topology, and the anchor coordinate unchanged, and $\Gamma_r$ does not inspect chunk boundaries, both chunkings yield the same native retrieval scope $S$.
\end{proposition}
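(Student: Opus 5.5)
The plan is to treat $S=\Gamma_r(a,T_X)$ as the value of a function whose arguments are fixed by the hypotheses. I will formalize ``$\Gamma_r$ does not inspect chunk boundaries'' as follows: $\Gamma_r$ is a map from (anchor, native topology) pairs to subsets of $U$. Equivalently, if $\Gamma_r$ were viewed as taking a chunk index $\mathcal C$ as an extra argument, its output would be constant in that argument. Both chunkings $\mathcal C_1$ and $\mathcal C_2$ act on the same runtime instance $X=(U,T_X)$ with the same anchor $a$. Hence $\Gamma_r(a,T_X;\mathcal C_1)=\Gamma_r(a,T_X;\mathcal C_2)$ follows directly from functional determinacy.

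To make the argument concrete, I will then check the three operators actually implemented. For belonging, I will observe that membership of $u$ in $S$ depends only on $c(u)$ and $c(a)$. For before and after, membership depends only on $o(u)$ and $o(a)$. The quantities $c(\cdot)$, $o(\cdot)$, and the unit set $U$ are all components of $T_X$ and $U$, which are produced from the original text. Chunk intervals $I(C)$ enter only later, in the definition of $\mathcal C(S)$. So none of the displayed definitions of $\Gamma$ contains a chunk symbol, and each operator satisfies the hypothesis. The set-builder expressions are therefore literally identical under both chunkings.

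The main subtlety, rather than an obstacle, is to state clearly what is \emph{not} claimed. I will note that $\mathcal C(S)$, the ranking, and the budgeted output $R$ do depend on the chunking. Invariance holds only at the native-unit scope level, which is why the proposition is called conditional. I will also remark that the hypothesis fails if the anchor coordinate were expressed as a chunk identifier, or if the topology were inferred from the chunks. The proof should make explicit that the anchor $a$ is a native-unit coordinate, which holds by the input conditions in Section~\ref{sec:method}.
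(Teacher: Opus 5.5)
Your proposal is correct and matches the paper's proof, which is the same functional-determinacy observation: all inputs to $\Gamma_r$ are unchanged, so its output $S$ is unchanged. Your operator-by-operator check and your remark that $\mathcal C(S)$, the ranking, and $R$ still depend on the chunking make explicit what the paper states in the paragraph following its proof.
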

\begin{proof}
All inputs to $\Gamma_r$ are unchanged, so its output is unchanged.
\end{proof}
Native-unit ranking would also be invariant if it did not depend on chunking. Our implementation encodes chunks and selects them under a budget, so chunking can change candidates, representations, and complete-evidence coverage even when $S$ is stable. Container-to-unit and unit-to-chunk mappings can be precomputed to avoid re-inferring stored relations; we do not claim a measured latency or memory advantage from this implementation choice.

\section{Data and Experimental Protocol}

\subsection{Controlled data construction and review}

The document domain uses QASPER v0.3 test \cite{dasigi2021qasper}, and the memory domain uses LongMemEval-S cleaned \cite{wu2025longmemeval}; oracle and M splits are not mixed in. We derive relational queries from real source text and gold evidence rather than report leaderboard results on the original benchmarks. Each domain contains 100 records allocated as 50 belonging, 25 before, and 25 after queries, for 200 records total. Each source question contributes at most one accepted record, and calibration and formal pools are separated.

Gold evidence is mapped to native target units before eligible anchors are enumerated. Belonging pairs share a container and are separated by at least four units; before and after pairs place the target four to eight units from the anchor in the specified direction. QASPER evidence must map unambiguously to paragraphs, while LongMemEval uses the answer-session identifier and turn-level \texttt{has\_answer} marker. Frozen hash ordering and rule-based validation enforce the relation quotas, coordinate constraints, evidence support, and leakage checks.

An instruction-following model assisted question wording at temperature zero, followed by agent review and local rule validation. This process is recorded as agent-assisted construction, not independent human annotation. Target-term refinement preserves phrases that can be located in the source. Some targets remain identifiable from $B$ alone, so the dataset tests whether relational scoping is useful, not whether topology is counterfactually necessary for every question. Model identity, prompts, review fields, and provenance are retained with the accompanying artifacts; Appendix~\ref{app:repro} summarizes data identity and run provenance.

Source versions, licenses, derived diagnostic material, and checksums appear in Appendix~\ref{app:repro}. Diagnostic records are not added to the 200-example main set.

\subsection{Methods and shared settings}

All methods use 768-dimensional normalized BGE-base-en-v1.5 embeddings and cosine similarity \cite{xiao2023cpack}, with a shared instruction prefix on the query side. Encoding length is explicitly capped at 512 tokens. Long native units are truncated only for automatic localization embeddings; original text used for coordinates and budget accounting is not truncated. The main evaluation forms nonoverlapping chunks from fixed 256 source-token spans (\texttt{fixed\_256}) with reversible character mappings and a common 1,024-token return limit. Because tokenization can differ across processing stages, a chunk re-encoded by BGE need not contain exactly 256 BGE tokens.

Table~\ref{tab:methods} distinguishes the information and role of each comparison. Dense RAG searches every chunk within a source instance. Anchor Window is an undirected $\pm8$ native-unit window ordered by distance to the anchor, with ties broken by native order and chunk identifier; it does not semantically rerank for the target. Because constructed targets are at least four units from the anchor and long units can exhaust the token budget, this is a generic distance-window baseline rather than a scope-matched implementation of the three relations.

\begin{table}[htbp]
\centering\small
\caption{Interfaces of the main comparisons. The last three methods receive the same correct anchor; all methods share chunks, encoder, and return budget.}
\label{tab:methods}
\begin{tabularx}{\linewidth}{@{}l>{\raggedright\arraybackslash}X>{\raggedright\arraybackslash}Xl@{}}
\toprule
Method & Candidate scope & Ranking input/basis & Role\\
\midrule
Dense RAG & Full instance & Full question $q$ & Flat dense baseline\\
Anchor Window & Anchor $\pm8$ units & Native distance & Distance-window control\\
\FQ{} & Chunks overlapping $\Gamma_r(a,T_X)$ & Full question $q$ & Internal ranking variant\\
\textbf{NativeScope} & Same as above & Target term $B$ & Target-ranking implementation\\
\bottomrule
\end{tabularx}
\end{table}

Dense RAG does not receive the gold anchor coordinate, so comparison with scoped methods measures the value of both structural input and its use. \FQ{} holds the candidate scope fixed and isolates ranking with $q$ versus $B$.

\subsection{Primary metric and statistics}

Let $G_i$ be the gold native-unit set for question $i$, and let $H_i$ be the union of character intervals from returned chunks. We define
\begin{equation}
R_i=\frac{1}{|G_i|}\sum_{u\in G_i}\mathbf{1}[I(u)\subseteq H_i],\qquad
\metric=\frac1N\sum_{i=1}^N R_i.
\end{equation}
The metric requires complete coverage of each gold native unit; returning only an answer span or part of a paragraph is insufficient. Results are macro-averaged over questions rather than weighted by the number of evidence units. A long dialogue turn may require multiple chunks for full coverage, making this metric stricter than answer-string matching. It is neither QA accuracy nor an official score from the source benchmarks.

For paired method differences on the same questions, we perform 10,000 paired bootstrap replications within each domain. Using NumPy's \texttt{default\_rng} with seed 20260929, each replicate independently samples with replacement 50 belonging, 25 before, and 25 after questions, recomputes the mean paired difference, and enters an ordinary 2.5/97.5 percentile interval. Relation-specific descriptions accompany aggregate results. Questions are the resampling units; intervals are not clustered by source paper or history, and no multiple-comparison correction is applied.

\subsection{Post-observation revision and result reuse}
\label{sec:revision}

The original frozen protocol used BGE Top-1 automatic anchoring. After observing localization errors and severe memory-domain degradation, we revised the main question to the conditional setting in which a correct anchor is supplied. This post-observation analysis uses all 200 records and leaves the questions, decompositions, evidence, encoder, chunking, and budget unchanged. The original automatic-localization results remain in Section~\ref{sec:auto} as an applicability boundary.

Existing outputs are reused only when their inputs and ranking rule match the revised condition; all other anchor-dependent results are recomputed. This decision depends on anchor correctness, not retrieval performance. Exact reuse counts, hashes, and method-specific provenance appear in Appendix~\ref{app:repro}. The revision isolates relational retrieval under a correct anchor but does not convert the analysis into a preregistered test; the earlier chunking, mechanism, and generative-QA runs remain outside the main evidence.

\section{Results}

\subsection{Main results with correct anchors}

Table~\ref{tab:main} reports results over all base records. NativeScope reaches 89.28\% in documents and 72.50\% in memories, improving over Dense RAG by 42.75 and 22.00 percentage points. Both paired intervals in Table~\ref{tab:ci} lie above zero, supporting the retrieval value of relational scoping on this controlled dataset under correct anchors and a fixed budget, rather than an unrestricted end-to-end RAG advantage.

\begin{table}[htbp]
\centering
\caption{$\metric$ with correct anchors (percent). Each domain has $N=100$ under fixed 256-token chunking. Bold marks only the largest point estimate.}
\label{tab:main}
\begin{tabular}{lrr}
\toprule
Method & Document & Memory\\
\midrule
Dense RAG & 46.53 & 50.50\\
Anchor Window & 45.36 & 2.00\\
\FQ{} (internal variant) & 87.78 & 68.50\\
\textbf{NativeScope} & \textbf{89.28} & \textbf{72.50}\\
\bottomrule
\end{tabular}
\end{table}

Figure~\ref{fig:main-results} places all four results on a common scale. Both relation-localized implementations substantially exceed instance-wide Dense RAG, while NativeScope and \FQ{} are close. The main evidence therefore points to candidate scoping rather than the necessity of $B$-only ranking.

\begin{figure}[tbp]
\centering
\includegraphics[width=0.94\linewidth]{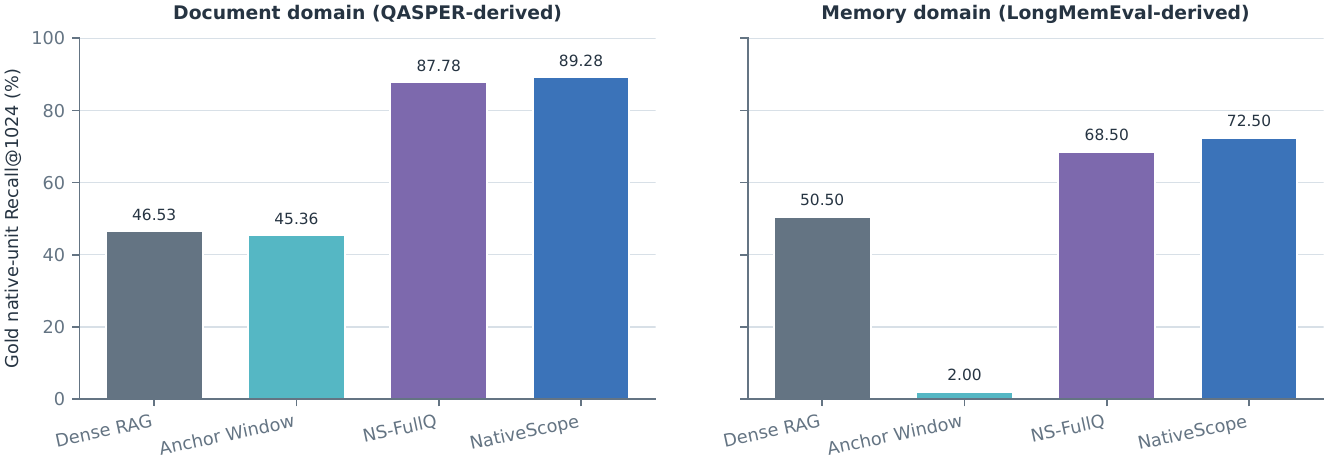}
\caption{Two-domain main results under correct anchors ($N=100$ per domain). The vertical axis is gold native-unit Recall@1024; all methods use the same chunks, encoder, and return budget.}
\label{fig:main-results}
\end{figure}

\begin{table}[htbp]
\centering\small
\caption{Paired NativeScope-minus-control differences and 95\% intervals, in percentage points. Bold marks comparisons whose interval lies entirely above zero; no advantage is claimed when an interval includes zero.}
\label{tab:ci}
\begin{tabular}{llrr}
\toprule
Domain & Control & Difference & 95\% interval\\
\midrule
Document & Dense RAG & $\boldsymbol{+42.75}$ & $\boldsymbol{[33.25,52.50]}$\\
Document & Anchor Window & $\boldsymbol{+43.92}$ & $\boldsymbol{[34.75,53.34]}$\\
Document & \FQ{} & $+1.50$ & $[-2.00,5.50]$\\
Memory & Dense RAG & $\boldsymbol{+22.00}$ & $\boldsymbol{[12.00,32.00]}$\\
Memory & Anchor Window & $\boldsymbol{+70.50}$ & $\boldsymbol{[61.50,79.00]}$\\
Memory & \FQ{} & $+4.00$ & $[-3.00,11.00]$\\
\bottomrule
\end{tabular}
\end{table}

The memory-domain window control reaches 2.00\%. As specified in the protocol, it prioritizes nearby units even though constructed targets begin four units away, and long turns can exhaust the budget before the target is covered. The comparison therefore shows that relation-aware scoping is preferable to this fixed distance policy under the present design.

\subsection{Relation types and the internal ranking variant}
\label{sec:ablation}

\FQ{} and NativeScope share anchors, native scopes, and candidate chunks on all 200 questions. Rankings differ on 171 records and returned chunk identifiers on 165, so the implementations are behaviorally distinct. The aggregate benefit of $B$ over $q$ is nevertheless only 1.50 and 4.00 points, and both intervals include zero. The present results support relational scoping but neither an independent advantage from $B$-only ranking nor statistical equivalence between the two variants.

\begin{table}[htbp]
\centering\small
\caption{$\metric$ by relation under correct anchors (percent). These groups partition the main evaluation. Bold marks the largest point estimate in each row; ties are included.}
\label{tab:relation}
\begin{tabular}{llrrrrr}
\toprule
Domain & Relation & $N$ & Dense & Window & \FQ{} & \textbf{NativeScope}\\
\midrule
Document & belonging & 50 & 46.06 & 53.56 & 86.06 & \textbf{92.06}\\
Document & before & 25 & 55.00 & 37.00 & \textbf{92.00} & 86.00\\
Document & after & 25 & 39.00 & 37.33 & \textbf{87.00} & \textbf{87.00}\\
Memory & belonging & 50 & 39.00 & 2.00 & 53.00 & \textbf{61.00}\\
Memory & before & 25 & 60.00 & 4.00 & 80.00 & \textbf{84.00}\\
Memory & after & 25 & 64.00 & 0.00 & \textbf{88.00} & 84.00\\
\bottomrule
\end{tabular}
\end{table}

NativeScope's point estimate exceeds Dense RAG in all six groups, but this does not establish a significant improvement in every group: the memory before and after intervals are $[0,48]$ and $[0,40]$ percentage points. The relative ranking input also varies by relation: $B$ is six points below the full question for document before and four points below for memory after. These results favor a relation-localized framework with a selectable ranking input rather than a requirement to discard the full question.

\subsection{Degradation under automatic localization}
\label{sec:auto}

Table~\ref{tab:auto} preserves the original automatic-localization results for default chunking. Top-1 anchor accuracy is only 61\% in documents and 37\% in memories; the proportions retaining all gold targets in the relation scope are 74\% and 53\%. Dense RAG does not consume the predicted anchor; memory-domain NativeScope reaches 35.50\%, below Dense RAG at 50.50\%, because an upstream coordinate error can become a hard candidate exclusion.

\begin{table}[htbp]
\centering\small
\caption{Default-chunking results under the original automatic-localization protocol (percent). Localization and scope coverage are diagnostics for the relation-localized pipeline. Within the retrieval-method rows, bold marks the largest recall in each domain.}
\label{tab:auto}
\begin{tabular}{lrr}
\toprule
Metric/method & Document & Memory\\
\midrule
Anchor Top-1 accuracy & 61.00 & 37.00\\
All gold targets in relation scope & 74.00 & 53.00\\
Dense RAG recall & 46.53 & \textbf{50.50}\\
Anchor Window recall & 43.69 & 3.00\\
\FQ{} recall & 65.44 & 33.50\\
\textbf{NativeScope recall} & \textbf{66.94} & 35.50\\
\bottomrule
\end{tabular}
\end{table}

Figure~\ref{fig:anchor-sensitivity} directly compares the two shared-scope implementations under correct and automatic anchors. The document domain drops by about 22 points, and the memory domain by more than 33, making anchor reliability a central condition for safe hard scoping.

\begin{figure}[tbp]
\centering
\includegraphics[width=0.90\linewidth]{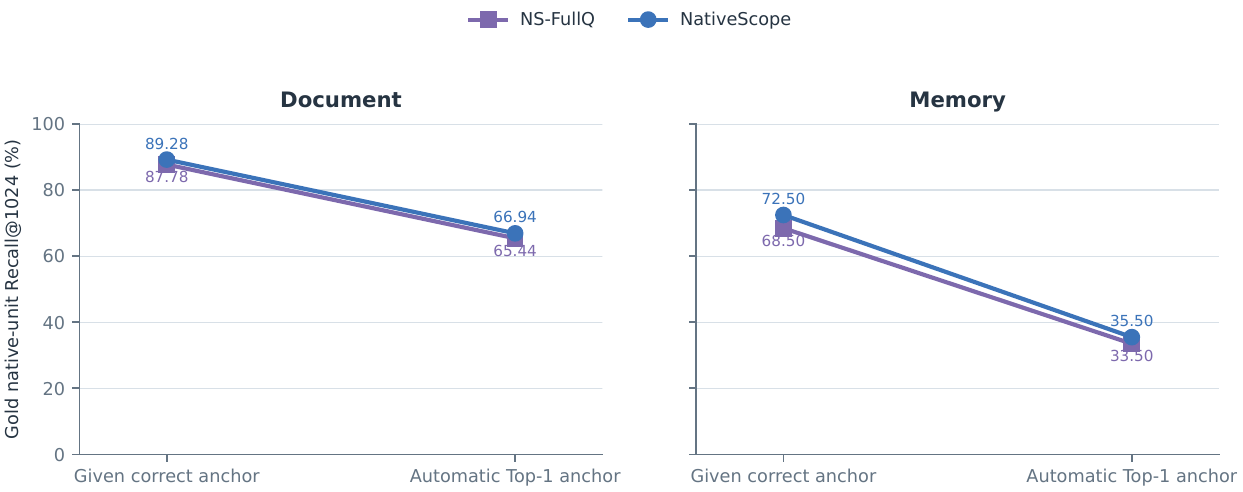}
\caption{Effect of the anchor condition on relation-localized recall. Correct-anchor and original automatic-localization results share default chunking and Recall@1024, while the main conditional evaluation was introduced after the automatic degradation was observed.}
\label{fig:anchor-sensitivity}
\end{figure}

On memory questions under automatic localization, NativeScope records 14 wins, 57 ties, and 29 losses relative to Dense RAG. Twenty-six of the losses have an incorrect anchor, and 25 exclude at least one gold target from the relation scope. One representative error maps a generic assistant opening to a similar turn in another session, thereby excluding the target before ranking begins. The pattern links most losses to the localization interface, although it does not assign a separate causal mechanism to every record.

\section{Discussion and Limitations}

The experiments support a simple division of labor: native relations determine where retrieval should search, and semantic similarity orders the candidates inside that scope. Under correct anchors, this interface substantially improves complete-evidence recall in both domains. The small and uncertain difference between NativeScope and \FQ{} locates the contribution in executable relational scoping, while leaving the best within-scope ranking query open.

Three limitations determine the scope of this evidence. First, the correct-anchor condition was introduced after automatic localization failed, making the main result conditional and post-observation. Second, each domain contains 100 agent-assisted controlled queries without independent human reannotation or fully natural external-query evaluation; literal phrase matching, minimum anchor distance, and complete-unit recall all shape the task. Third, the comparison set omits stronger lexical retrievers, rerankers, and external structured-retrieval systems. The study consequently establishes the utility of native relations in this controlled interface, not their necessity or global optimality for question answering.

The correct-anchor evaluation covers only the default chunking. Earlier chunking and mechanism runs use a different anchor condition and are retained as exploratory records rather than combined with the main analysis. Generative QA stopped after 36 of 800 planned responses, and latency, memory, and API cost were not evaluated. These omissions leave cross-chunking robustness, end-to-end answer quality, and systems efficiency for future work.

In deployment, reliable coordinates may come from user selection, stable record identifiers, or a validated upstream locator. When only unreliable automatic localization is available, calibration, verification, or fallback is necessary because hard scoping can underperform instance-wide retrieval. Errors in container metadata, cross-session events, and ambiguous relation parsing remain outside the current interface guarantees.

\section{Conclusion}

NativeScope executes stored relations to restrict the candidate domain before semantic ranking. On 200 controlled records with correct anchors, it improves native-unit recall over instance-wide Dense RAG by 42.75 points for documents and 22.00 points for memories. Full-question ranking performs similarly within the same scope, while automatic-anchor failures sharply reduce recall. The central result is therefore conditional but actionable: when a system can supply reliable coordinates and native relations, executing that structure can materially improve retrieval under a fixed context budget.

\appendix
\section{Reproduction Information and Evidence Inventory}
\label{app:repro}

\subsection{Versions and data identity}

Raw datasets are not redistributed with the derived package, and API keys do not enter the paper or artifacts. QASPER is recorded under CC BY 4.0, LongMemEval cleaned under MIT, and BGE weights under MIT. Full raw-file SHA-256 values, derived-set hashes, and model-file inventories appear in \path{experiments/E/data_manifest.json}. Key identities are:
\begin{quote}\small
QASPER revision: \path{fdc9d8214fbab5dd782958601db4d678e6934a54}; v0.3 test.\\
LongMemEval revision: \path{98d7416c24c778c2fee6e6f3006e7a073259d48f}; S cleaned only.\\
BGE revision: \path{a5beb1e3e68b9ab74eb54cfd186867f64f240e1a}.\\
Base 200-record SHA-256: \texttt{5de99299a64db82e\allowbreak f3b8aa5277745e93\allowbreak dc1693185d06b226\allowbreak e452e84a8162b678}.
\end{quote}
Native parsing retains paragraph or turn text, container identifiers, and global unit order. Frozen records contain source-question identifiers, source revisions, candidate hashes, prompt versions, and reviewer types. Derived data and audit logs are not equivalent to human gold annotation; reusers should retain this provenance statement.

\subsection{Run scope and result provenance}

The original automatic-localization experiment contains 40 configurations and 4,000 question--method results spanning two domains, four methods, and five chunkings. The mechanism batch contains 102 configurations and 9,280 results, including ten-seed topology shuffling, full-question gold-anchor diagnostics, three semantic conditions, a positive lexical-matching subset, and role/reverse checks. Except for explicitly labeled gold-anchor diagnostics, these runs use automatic localization. Table~\ref{tab:inventory} states how they contribute to the paper.

\begin{table}[htbp]
\centering\small
\caption{Executed material and its evidential role. Completed runs do not establish all earlier claims.}
\label{tab:inventory}
\begin{tabularx}{\linewidth}{@{}lrl>{\raggedright\arraybackslash}X@{}}
\toprule
Batch & Results & Anchor condition & Use in this paper\\
\midrule
Original chunking main run & 4,000 & Automatic & Default-chunking boundary result; other runs retained as exploratory records\\
Original mechanism batch & 9,280 & Automatic and separate diagnostics & Audit retained; not evidence for mechanisms under the revised condition\\
Correct-anchor revision & 800 & Given correct coordinate & Main evidence in Tables~\ref{tab:main}--\ref{tab:relation}\\
Generative QA & 36 & Original protocol, incomplete & Paused; no method comparison reported\\
\bottomrule
\end{tabularx}
\end{table}

The revision note is \path{experiments/E/GIVEN_ANCHOR_REVISION.md}. The independent result directory \path{experiments/E/results/given_anchor_v1/} contains per-record results, reuse provenance, run status, and statistics. \path{analysis.json} stores unrounded means, paired differences, and intervals; table percentages and percentage points are these values multiplied by 100 and rounded. Across all four methods, the document domain's 400 results comprise 322 reused and 78 recomputed, while the memory domain's 400 comprise 274 reused and 126 recomputed; these reused counts include the 100 Dense RAG results in each domain. The anchor intervention replaces only the localization return value. NativeScope continues to rank with $B$ and uses unchanged scope and budget code.

\subsection{Verification boundary}

Data validation checks coordinates, containers, directional distances, evidence mappings, uniqueness, duplicates, and quotas. Retrieval records preserve the native scope, candidate chunks, ranking, and returned chunks so coverage and budget can be recomputed. Consistency checks between full-question and $B$ ranking must compare candidate sets rather than infer identical behavior from similar recall. The original gold-anchor diagnostic ranks with the full question and must be interpreted as \FQ{} when reproduced.

\bibliographystyle{unsrt}
\bibliography{E_nativescope_en}
\end{document}